\documentclass[conference]{IEEEtran}
\usepackage{amsmath,amssymb,amsthm,bm,mathtools,mathrsfs}
\usepackage{graphicx}
\usepackage{enumitem}
\usepackage{mathrsfs}
\usepackage{enumitem}
\usepackage{stfloats}
\usepackage{graphicx}
\usepackage{cite}

\newtheorem{lemma}{Lemma}
\newtheorem{proposition}{Proposition}
\newtheorem{theorem}{Theorem}
\newtheorem{remark}{Remark}

\DeclareMathOperator{\tr}{tr}

\begin{document}
    
    \title{Optimal Movable Antenna Placement for Near-Field Wireless Sensing}
    
    \author{
    	\IEEEauthorblockN{Jinjian Liu,
    		Xianxin Song, and
    		Xianghao Yu}
    	\IEEEauthorblockA{Department of Electrical Engineering, City University of Hong Kong, Hong Kong}
    	\IEEEauthorblockA{Emails: jinjian.liu@my.cityu.edu.hk, xianxin.song@cityu.edu.hk, alex.yu@cityu.edu.hk}
    }

    \maketitle
    
	\begin{abstract}
		Movable antennas (MAs) have emerged as a promising technology for wireless sensing by reconfiguring antenna positions to exploit additional spatial degrees of freedom (DoFs). This paper investigates a robust movable antenna placement strategy for near-field wireless sensing to minimize the worst-case squared position error bound (SPEB). By temporarily relaxing the minimum inter-element spacing constraint, we first establish the optimality of centro-symmetric antenna position distribution, which simplifies the identification of the worst-case source, locating it at the array broadside on the Rayleigh boundary. Moreover, by leveraging moment-based analysis with the Richter-Tchakaloff theorem, we derive a closed-form optimal solution with three points supported on the center and two edges of the array. Guided by this structural insight, we finally develop an efficient three-point discrete deployment strategy to ensure the minimum inter-element spacing. Simulations demonstrate that the proposed design consistently outperforms conventional fixed antenna arrays and matches the exhaustive search benchmark at negligible computational complexity.
	\end{abstract}

\section{Introduction}
    
    Movable antenna (MA) has emerged as a key enabling technology for sixth-generation (6G) wireless networks to fulfill the stringent requirements of ultra-reliable and high-precision wireless sensing. Unlike the traditional antenna configurations with uniform distribution \cite{RobertsTAP2011}, MAs can reconfigure the electromagnetic channel conditions by adjusting the antenna positions within a spatial region\cite{ZhuCST}. This capability unlocks additional spatial degrees of freedom (DoFs), offering great potential to enhance wireless sensing performance \cite{ZhuCST}.

    To fully exploit this potential, MA systems typically employ large array apertures, thereby effectively enlarging the near-field region and rendering spherical-wave effects non-negligible \cite{liuOJCS2023}.
    Driven by this physical reality, MA-enabled sensing in the near-field region has attracted significant research attention \cite{GazzahTAP2014, wang202512arxiv, DingTWC2026, SunIoTJ2025}. 
    Specifically, authors in \cite{GazzahTAP2014} derived the Cram{\'e}r-Rao bound (CRB) expression with respect to antenna positions in the near-field scenario, numerically demonstrated that centro-symmetric array geometries can improve both range and angle estimation performance compared to traditional uniform-linear-array (ULA).
    Furthermore, prior work \cite{wang202512arxiv} adopted worst-case CRB as the estimation performance metric, and accordingly optimized the antenna location to minimize the worst-case estimation CRB in the near-field region by utilizing an iterative algorithm. 

    Moreover, the potential of MA-enabled sensing has been explored to enhance near-field integrated sensing and communications (ISAC) performance in \cite{DingTWC2026, SunIoTJ2025}. 
    Specifically, \cite{DingTWC2026} maximized a weighted sum of sensing mutual information and communication rate by jointly optimizing the MA positions, sensing-signal covariance
    matrices and transmit/receive beamformers.
    Similarly, \cite{SunIoTJ2025} investigated rotatable MAs to balance the angle-range CRB and communication sum-rate through the joint design of transmit beamformers and the rotatable MA positions/rotations.
    However, current studies on near-field MA sensing \cite{wang202512arxiv} and ISAC \cite{DingTWC2026, SunIoTJ2025} primarily rely on iterative algorithms to optimize the antenna position variables, which obscures structural insights of the optimal array geometry and incurs extremely high computational costs, especially for large-scale antenna arrays.

    In this paper, we investigate near-field device-based sensing enabled by linear MA arrays, aiming to characterize the closed-form optimal antenna geometry for minimizing the worst estimation error of the source's position. We first establish a near-field sensing model and adopt the squared position error bound (SPEB) as the sensing performance metric\cite{Wintit2010} to avoid the dimensional mismatch between angle and range error.
    By temporarily relaxing the minimum inter-element spacing constraint, we prove that the optimal antenna position distribution is centro-symmetric and further demonstrate that the array broadside on the Rayleigh boundary yields the worst-case SPEB. Furthermore, Richter-Tchakaloff theorem is leveraged to reveal that the optimal solution can be characterized by a three-point centro-symmetric structure, leading to a practical discrete deployment strategy that satisfies the spacing constraint. Simulation results demonstrate that the proposed design significantly outperforms traditional fixed-position antenna arrays and achieves performance comparable to exhaustive search with orders-of-magnitude lower computational complexity.      
    
    \textit{Notations:}
    We use normal-face letters to denote scalars,
    boldface lowercase and uppercase letters to denote column vectors and matrices, respectively.
    The transpose is denoted by $(\cdot)^{\mathrm T}$.
    The imaginary unit is denoted by $\jmath = \sqrt{-1}$.
    For a vector $\bm x$, $x_n$ denotes its $n$-th entry.
    The $N\times N$ identity matrix is denoted by $\mathbf I_N$.
    The set of $M \times N$ real matrices is denoted by $\mathbb{R}^{M \times N}$.
    The set of complex numbers is denoted by $\mathbb{C}$.
    The trace is denoted by $\mathrm{tr}(\cdot)$.
    The variance and covariance operations are denoted by $\mathrm{Var}(\cdot)$ and $\mathrm{Cov}(\cdot,\cdot)$, respectively.
    The notation $\arg\max$ denotes the set of maximizers.
    A circularly symmetric complex Gaussian vector with mean $\bm \mu$ and covariance $\mathbf A$
    is denoted by $\mathcal{CN}(\bm \mu,\mathbf A)$.
    The set of $n\times n$ real symmetric positive definite matrices is denoted by $\mathbb{S}^{n}_{++}$,
    and $\preceq$/$\succeq$ denotes the Loewner partial order.

    \section{System Model and Problem Formulation}
    \label{sec:system_model}
	We consider a single-source near-field sensing scenario, as shown in Fig.~\ref{fig:nearfield_scenario}. The system consists of a radiating source acting as the transmitter and a linear MA array serving as the sensing receiver. In this work, the sensing region of interest is restricted in the radiating near-field region of the MA \cite{liuOJCS2023}, i.e., the Fresnel region, which is denoted by $\Gamma$. 
	Over $\Gamma$, we adopt the uniform spherical wave (USW) channel model in \cite{liuOJCS2023}. To ensure the validity of the USW model, we restrict the source range to $r \in [d_{\mathrm{F}}, d_{\mathrm{R}}]$, where $d_{\mathrm{F}}$ and $d_{\mathrm{R}}$ denote the Fresnel and Rayleigh distances, respectively.
	
	\begin{figure}[t]
		\centering
		\includegraphics[width=0.65\linewidth, height=0.65\linewidth, keepaspectratio]{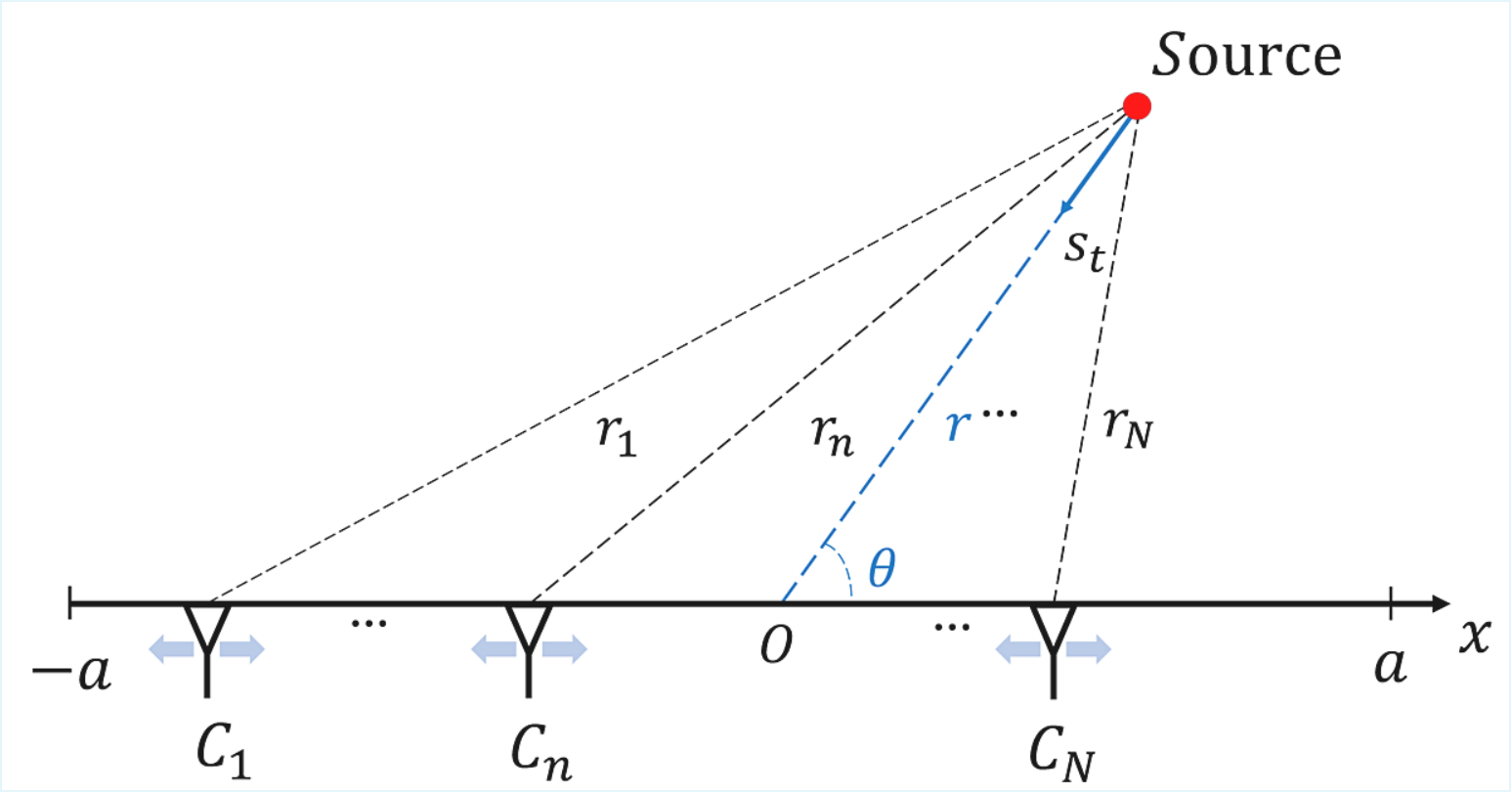}
		\caption{The 1D MA array geometry for single-source sensing in the near field.}
		\label{fig:nearfield_scenario}
	\end{figure}
	
	Specifically, the linear MA array is equipped with $N$ receive antennas aligned along the $x$-axis. Let $C_n = (x_n, 0)$ denote the location of $n$-th receive antenna with $x_n\in[-a,a]$, $n=1,\ldots,N$, where the aperture is given by $D=2a$. Let
	$\bm{x}=[x_1,\ldots,x_N]^{\mathrm{T}}$ collect the antenna coordinates. Without loss of generality, the antennas are indexed in ascending order of their locations, i.e., $-a \le x_1 < x_2 \dots < x_N \le a$.
	Let $\theta\in[0,\pi]$ denote the angle of arrival (AoA), defined as the angle between the source-to-array-center line and the positive $x$-axis.
	Accounting for the effective aperture loss \cite{liuOJCS2023,LuTWC2022}, we define the effective aperture as $D_{\mathrm{eff}} = D\sin\theta$ for any given $\theta$. Accordingly, the Fresnel and Rayleigh distances are implicitly represented by $d_{\mathrm{F}} = 0.62\sqrt{\frac{D_{\mathrm{eff}}^3}{\lambda}}$  and $d_{\mathrm{R}} = \frac{2D_{\mathrm{eff}}^2}{\lambda}$, respectively \cite{Balanis}.
	
	The source emits a narrowband signal $s_t$ with wavelength $\lambda$ toward the MA array, where $t \in \{1, \dots, T\}$ denotes the snapshot index and $T$ is the total number of snapshots. The signal $s_t$ is modeled as a deterministic sensing waveform satisfying $\frac{1}{T} \sum_{t=1}^{T} |s_t|^2 = P$, with $P$ representing the transmit power. The position of the source can be characterized either by its Cartesian coordinates
		\begin{equation}
			\bm{p}=[p_1,p_2]^{\mathrm{T}},
		\end{equation}
	or, equivalently, by the polar coordinates
$
		\bm{\eta}=[u,r]^{\mathrm{T}},
$
	where $u=\cos\theta$ and $r = \sqrt{p_1^2+p_2^2}$ represents the range from the array origin to the source.

    Denote $r_n$ as the propagation distance between the source and the
    $n$-th antenna:
    	\begin{equation}
    		r_n=\sqrt{r^2-2rux_n+x_n^2}.
    		\label{eq:rn}
    	\end{equation}	
    Using the Fresnel approximation \cite{liuOJCS2023} of \eqref{eq:rn} and after removing the common range-dependent phase term $e^{\jmath\frac{2\pi}{\lambda}r}$, the array manifold vector at the transmitter can be written as \cite{wang202512arxiv}
    \begin{equation}
    	\label{eq. steering vector}
    	\boldsymbol{\alpha}(\bm{x},\bm{\eta}) \approx
    	\left[ 
    	e^{\jmath \frac{2\pi}{\lambda} \left(ux_1-\frac{1-u^2}{2r}x_1^2\right)}, 
    	\dots, 
    	e^{\jmath \frac{2\pi}{\lambda} \left(ux_N-\frac{1-u^2}{2r}x_N^2\right)} 
    	\right]^{\mathrm{T}}.
    \end{equation}
    
	Under the adopted USW model, which assumes uniform path loss amplitudes across the array elements while capturing the phase variations, we now formulate the received signal. Specifically, the received signal at the MA array collected over $T$ snapshots can be expressed as
	\begin{equation}
		\bm{y}_t=\beta\,\boldsymbol{\alpha}(\bm{x},\bm{\eta})\,s_t+\bm{n}_t,\qquad
		t=1,\ldots,T,
	\end{equation}
	where $\beta\in\mathbb{C}$ is a unified channel coefficient, and
	$\bm{n}_t\sim\mathcal{CN}(\bm{0},\sigma^2\mathbf{I}_N)$ is the additive white Gaussian noise (AWGN) at the sensing receiver.
	We stack the transmit signals, receive signals, and the noise across $T$ snapshots as $\boldsymbol{s} = [s_1, \ldots, s_T]^{\mathrm{T}}$, $\bm{Y} = [\boldsymbol{y}_1, \ldots, \boldsymbol{y}_T]$, and $\bm{N} = [\boldsymbol{n}_1, \ldots, \boldsymbol{n}_T]$, respectively, and accordingly,
		\begin{equation}
			\begin{aligned}
				\label{eq. stacked receive matrix}
				\bm{Y} = \boldsymbol{\alpha}(\bm{x},\bm{\eta})\boldsymbol{s}^{\mathrm{T}}+\bm{N}.
			\end{aligned}
		\end{equation}
	
	  To evaluate the fundamental limits of the source position estimation performance, let $\mathrm{CRB}_{\bm{\eta}} \in \mathbb {R}^{2\times2}$ denote the CRB matrix for estimating $\bm{\eta}$. Based on the received signal model in \eqref{eq. stacked receive matrix}, the diagonal entries of $\mathrm{CRB}_{\bm{\eta}}$ representing the theoretical variance lower bounds for the parameters $u$ and $r$, are denoted by $\mathrm{CRB}_u$ and $\mathrm{CRB}_r$, respectively, given by \cite{wang202512arxiv}
	\begin{equation}
		\begin{aligned}
			\label{eq. polar CRB}
			\mathrm{CRB}_u(\bm{x})
			&=
			\kappa\,
			\frac{G_1(\tilde{\bm{x}})}
			{G_1(\bm{x})\, G_1(\tilde{\bm{x}})
				-{G_2^2}(\bm{x}, \tilde{\bm{x}})},\\[0.4ex]
			\mathrm{CRB}_r(\bm{x}, \bm{\eta})
			&=
			\kappa\,
			\frac{
				4r^4 G_1(\bm{x})
				+ 8 u r^3 G_2(\bm{x}, \tilde{\bm{x}})
				+ 4 u^2 r^2 G_1(\tilde{\bm{x}})
			}{
				(1 - u^2)^2
				\bigl(
				G_1(\bm{x}) G_1(\tilde{\bm{x}})
				-{G_2^2}(\bm{x}, \tilde{\bm{x}})
				\bigr)
			}.
		\end{aligned}
	\end{equation}
	Here
	$\kappa = \frac{\sigma^2 \lambda^2}{8\pi^2 T P N |\beta|^2}$ and
	$\tilde{\bm{x}} = [x_1^2, \dots, x_N^2]^{\mathrm{T}}$. The $G_1(\cdot)$ and $G_2(\cdot,\cdot)$ functions are defined as $G_1(\boldsymbol{x}) = \frac{1}{N} \sum_{n=1}^N x_n^2 - g(\boldsymbol{x})^2$ and $G_2(\boldsymbol{x}, \tilde{\boldsymbol{x}}) = \frac{1}{N} \sum_{n=1}^N x_n^3 - g(\boldsymbol{x})g(\tilde{\boldsymbol{x}})$, respectively, with $g(\boldsymbol{x}) = \frac{1}{N} \sum_{n=1}^N x_n$.

    \begin{remark}\label{remark:SPEB}
    	It is worth noting that directly minimizing the trace of the CRB matrix with respect to $\bm{\eta} = [u, r]^{\mathrm{T}}$ involves mismatched units (i.e., the dimensionless $u$ versus $r$ in meters), leading to a physically ambiguous objective. This motivates us to consider a dimensionally unified metric.
    \end{remark}
    
    To evaluate the sensing performance for the source's position, we adopt the SPEB as the metric \cite{Wintit2010}. The SPEB serves as the theoretical lower bound on the mean squared position error in Cartesian coordinates $\boldsymbol{p}$ for any unbiased estimator. Mathematically, it is defined as the trace of the Cartesian CRB matrix and denoted by $\mathrm{SPEB}(\bm{x},\bm{p})$, i.e., $\mathrm{SPEB}(\bm{x},\bm{p}) = \tr(\mathrm{CRB}(\bm{x},\bm{p}))$ \cite{Wintit2010}. Its relationship with the polar CRB is given in the following lemma.
    
    \begin{lemma}\label{lemma:CRB_Cartesian}
    	The source SPEB is given by
    	\begin{equation}
    		\begin{aligned}
    			\label{eq. CRB-transform}
    			\mathrm{SPEB}(\bm{x},\bm{p}) =
    			\frac{r^2}{1-u^2}\,\mathrm{CRB}_u(\bm{x})
    			+
    			\mathrm{CRB}_r(\bm{x},\bm{\eta}).
    		\end{aligned}
    	\end{equation}
    \end{lemma}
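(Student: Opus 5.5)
The plan is to use the standard reparameterization (chain) rule for the Cram\'er--Rao bound. The map $\bm{\eta}=[u,r]^{\mathrm T}\mapsto \bm{p}$ is smooth and bijective on the region of interest. Since $\theta\in[0,\pi]$ and $u=\cos\theta$, we have $\sin\theta=\sqrt{1-u^2}\ge 0$, and the map is
\begin{equation*}
p_1 = r u,\qquad p_2 = r\sqrt{1-u^2}.
\end{equation*}
We restrict to $u\in(-1,1)$, i.e., $\theta\neq 0,\pi$, which is the only case where the formula is meaningful because of the factor $(1-u^2)^{-1}$. There the Jacobian
\begin{equation*}
\mathbf{J}=\frac{\partial \bm{p}}{\partial \bm{\eta}^{\mathrm T}}
=\begin{bmatrix} r & u\\ -\dfrac{ru}{\sqrt{1-u^2}} & \sqrt{1-u^2}\end{bmatrix}
\end{equation*}
is nonsingular, since $\det\mathbf{J}=r/\sqrt{1-u^2}>0$. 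The Fisher information therefore transforms as $\mathbf{F}_{\bm{\eta}}=\mathbf{J}^{\mathrm T}\mathbf{F}_{\bm{p}}\mathbf{J}$. Inverting this relation gives
\begin{equation*}
\mathrm{CRB}(\bm{x},\bm{p})=\mathbf{J}\,\mathrm{CRB}_{\bm{\eta}}\,\mathbf{J}^{\mathrm T}.
\end{equation*}

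The second step is to take the trace and use its cyclic property:
\begin{equation*}
\mathrm{SPEB}=\tr\bigl(\mathrm{CRB}_{\bm{\eta}}\,\mathbf{J}^{\mathrm T}\mathbf{J}\bigr).
\end{equation*}
So we only need the Gram matrix $\mathbf{J}^{\mathrm T}\mathbf{J}$. The first column of $\mathbf{J}$ has squared norm
\begin{equation*}
r^2+\frac{r^2u^2}{1-u^2}=\frac{r^2}{1-u^2}.
\end{equation*}
The second column has squared norm $u^2+(1-u^2)=1$. Their inner product is $ru-ru=0$, so the columns are orthogonal. This is the geometric fact that the radial and angular directions of polar coordinates are orthogonal. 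Hence
\begin{equation*}
\mathbf{J}^{\mathrm T}\mathbf{J}=\mathrm{diag}\!\left(\frac{r^2}{1-u^2},\,1\right).
\end{equation*}

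Consequently the off-diagonal entry of $\mathrm{CRB}_{\bm{\eta}}$ drops out of the trace. Only the diagonal entries $\mathrm{CRB}_u$ and $\mathrm{CRB}_r$ from \eqref{eq. polar CRB} survive, weighted by $r^2/(1-u^2)$ and $1$ respectively, which yields \eqref{eq. CRB-transform}.

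The main point requiring care is justifying the transformation rule itself. It needs the reparameterization to be a local diffeomorphism, so that $\mathbf{J}$ is invertible, and the endpoint cases $u=\pm1$ must be excluded. The channel coefficient $\beta$ is a nuisance parameter, but this causes no difficulty. The map $(\bm{\eta},\beta)\mapsto(\bm{p},\beta)$ has a block-diagonal Jacobian, so the same transformation applies directly to the $2\times2$ block of the CRB matrix for $\bm{\eta}$ after eliminating $\beta$ via the Schur complement. Beyond this, the computation is mechanical, and the orthogonality of the columns is what makes the result so clean.
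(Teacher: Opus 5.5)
Your proposal is correct and follows essentially the same route as the paper's proof. Both arguments use the Jacobian of $(u,r)\mapsto(ru,\,r\sqrt{1-u^2})$ and the reparameterization rule $\mathrm{CRB}_{\bm p}=\mathbf{J}\,\mathrm{CRB}_{\bm\eta}\,\mathbf{J}^{\mathrm T}$, then apply the cyclic trace together with $\mathbf{J}^{\mathrm T}\mathbf{J}=\mathrm{diag}\bigl(r^2/(1-u^2),1\bigr)$. Your remarks on the invertibility of $\mathbf{J}$, on excluding $u=\pm1$, and on treating $\beta$ as a nuisance parameter are valid details that the paper leaves implicit.
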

    \begin{IEEEproof}
    	See Appendix~A.
    \end{IEEEproof}

	In this paper, we aim to ensure robust performance against the worst-case source location within the prescribed near-field region $\Gamma$. This is achieved by minimizing the maximum $\mathrm{SPEB}(\bm{x},\bm{p})$, i.e.,
     \begin{subequations}
     	\begin{alignat}{3}
     		\text{(P1)}: & \quad & \min_{\boldsymbol{x}} \quad \max_{\boldsymbol{p} \in \Gamma} & \quad \mathrm{SPEB}(\boldsymbol{x},\boldsymbol{p}) \\
     		& & \mathrm{s.t.} & \quad x_n \in [-a, a], \quad \forall n=1, \dots, N, \\
     		& & & \quad x_{n+1} - x_n \ge \frac{\lambda}{2}, \quad \forall n=1,\dots,N-1, \label{eq. cons-inter-spacing}
     	\end{alignat}
     \end{subequations}
     where constraint \eqref{eq. cons-inter-spacing} ensures minimum inter-element spacing to avoid mutual coupling. 
	
	\section{Problem Reformulation and Analysis}
	   
    In problem (P1), the minimum spacing constraint \eqref{eq. cons-inter-spacing} couples adjacent antenna positions, which makes the placement design mathematically intractable. To facilitate our analysis and gain theoretical insight, we temporarily relax \eqref{eq. cons-inter-spacing}. 
    Moreover, from \eqref{eq. polar CRB}, we observe that both $G_1(\bm x)$ and $G_2(\bm x,\tilde{\bm x})$ are closely related to the term $g(\boldsymbol{x}) = \frac{1}{N} \sum_{n=1}^N x_n$. This term can be naturally interpreted as an average over $N$ samples $\{x_n\}_{n=1}^N$, which are drawn from a certain distribution of a random variable $X$.
    Specifically, let $X$ be a discrete random variable supported on $\mathcal S\subseteq[-a,a]$ that models the distribution of antenna locations, and view elements in $\bm{x}$ as $N$ realizations of $X$. As $N \to \infty$, functions $G_1(\bm x)$ and $G_2(\bm x,\tilde{\bm x})$ can be identified with the variance and covariance associated with $X$, i.e., $\mathrm{Var}(X)$ and $\mathrm{Cov}(X,X^2)$, respectively. Therefore, we can reformulate the discrete antenna placement problem as a distribution design problem for $X$. To characterize the distribution of $X$, we introduce its probability mass function $w(\cdot)$ on $\mathcal S$ by $w(\zeta)= \Pr(X=\zeta)$ for all $\zeta\in\mathcal S$.

    This transformation leads to the reformulated problem $(\text{P1}')$ which is explicitly expressed as
    \begin{subequations}
    	\begin{alignat}{3}
    		(\text{P1}'): & \quad & \min_{w(\cdot)} \quad \max_{\boldsymbol{p} \in \Gamma} & \quad \mathrm{SPEB}(w(\cdot),\boldsymbol{p}) \\
    		& & \mathrm{s.t.} & \quad \sum_{\zeta\in\mathcal S} w(\zeta)=1, \\ &&& \quad w(\zeta)\ge 0,\quad \forall \zeta\in\mathcal S.
    	\end{alignat}
    \end{subequations}
    Here, $\mathrm{SPEB}(w(\cdot),\boldsymbol{p})$ denotes the SPEB evaluated under the distribution $w(\cdot)$.
     
    Based on the theoretical optimal solution for $(\text{P1}')$, we will finally construct a feasible discrete antenna placement for (P1) that strictly satisfies the spacing constraint.
	To solve $(\text{P1}')$, we adopt a two-stage approach to decouple the min-max structure. We first address the inner maximization stage by defining the worst-case design metric $F(w(\cdot))$ as
	\begin{equation}
		\label{eq:def-worstcase}
		F(w(\cdot))
		\;=\;
		\max_{{\boldsymbol{p}}\in \Gamma} \quad
		\mathrm{SPEB}(w(\cdot),\bm{p}).
	\end{equation}

	\begin{proposition}
		For any distribution $w(\cdot)$ supported on $[-a,a]$, let $w_{\mathrm{sym}}(\cdot)$ denote its symmetrized counterpart constructed as a mixture of $w(\cdot)$ and its reflected distribution. Then, the following inequality holds:
		\begin{equation}
			F(w_{\mathrm{sym}}(\cdot)) \le F(w(\cdot)).
		\end{equation}
	\end{proposition}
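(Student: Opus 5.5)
The approach is to write the SPEB as a fixed, $w$-independent linear transformation of the inverse of a Fisher information matrix (FIM) that depends on $w(\cdot)$ only through a covariance matrix. The inequality then follows from three facts: matrix concavity of covariance under mixing, matrix convexity of inversion, and a reflection invariance of $F$.

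\textbf{Step 1 (FIM structure).} In the large-$N$ (distributional) model, the phase of the manifold \eqref{eq. steering vector} at location $X$ is $\frac{2\pi}{\lambda}(uX-\rho X^2)$ with $\rho=\frac{1-u^2}{2r}$. The FIM for $(u,\rho)$, after removing the nuisance channel phase, is therefore proportional to the covariance matrix of $(X,-X^2)$:
\begin{equation*}
\mathbf J(w)=\kappa^{-1}\begin{bmatrix}\mathrm{Var}(X) & -\mathrm{Cov}(X,X^2)\\ -\mathrm{Cov}(X,X^2) & \mathrm{Var}(X^2)\end{bmatrix}.
\end{equation*}
This is consistent with the $G_1$, $G_2$ expressions in \eqref{eq. polar CRB}. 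By Lemma~\ref{lemma:CRB_Cartesian}, and since $\bm p$ is a smooth function of $(u,\rho)$, we can write
\begin{equation*}
\mathrm{SPEB}(w(\cdot),\bm p)=\tr\bigl(\mathbf T(\bm p)\,\mathbf J(w)^{-1}\mathbf T(\bm p)^{\mathrm T}\bigr),
\end{equation*}
where the Jacobian $\mathbf T(\bm p)$ depends only on $\bm p$ and not on $w(\cdot)$. I would verify that this reproduces \eqref{eq. CRB-transform}; this is a routine but necessary check.

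\textbf{Step 2 (reflection invariance).} Let $w_{\mathrm{ref}}(\zeta)=w(-\zeta)$. Under $X\mapsto -X$, both $\mathrm{Var}(X)$ and $\mathrm{Var}(X^2)$ are unchanged, while $\mathrm{Cov}(X,X^2)$ flips sign. In \eqref{eq. polar CRB}, $G_2$ enters only through $G_2^2$ and the product $uG_2$, and the prefactor $r^2/(1-u^2)$ in Lemma~\ref{lemma:CRB_Cartesian} is even in $u$. Hence
\begin{equation*}
\mathrm{SPEB}(w_{\mathrm{ref}},(u,r))=\mathrm{SPEB}(w,(-u,r)).
\end{equation*}
The region $\Gamma$ is invariant under $\theta\mapsto\pi-\theta$, because $d_{\mathrm F}$ and $d_{\mathrm R}$ depend on $\theta$ only through $\sin\theta$. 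Taking the maximum over $\Gamma$ therefore gives $F(w_{\mathrm{ref}})=F(w)$.

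\textbf{Step 3 (mixture argument).} Take $w_{\mathrm{sym}}=\tfrac12(w+w_{\mathrm{ref}})$. By the law of total covariance, the covariance of a mixture dominates the average of the component covariances, because the between-component term is positive semidefinite. This gives
\begin{equation*}
\mathbf J(w_{\mathrm{sym}})\succeq\tfrac12\bigl(\mathbf J(w)+\mathbf J(w_{\mathrm{ref}})\bigr).
\end{equation*}
Matrix inversion is operator-decreasing and operator-convex on $\mathbb S^2_{++}$. Applying both properties, and using that $\mathbf A\mapsto\tr(\mathbf T\mathbf A\mathbf T^{\mathrm T})$ is linear and monotone, we obtain for every $\bm p\in\Gamma$
\begin{equation*}
\mathrm{SPEB}(w_{\mathrm{sym}},\bm p)\le\tfrac12\mathrm{SPEB}(w,\bm p)+\tfrac12\mathrm{SPEB}(w_{\mathrm{ref}},\bm p).
\end{equation*}
Maximizing over $\bm p$, then using subadditivity of the maximum and Step 2, gives $F(w_{\mathrm{sym}})\le\tfrac12F(w)+\tfrac12F(w_{\mathrm{ref}})=F(w)$.

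\textbf{Main obstacle.} The delicate step is Step 1: confirming that the entire $w$-dependence of \eqref{eq. CRB-transform} is captured by $\mathbf J(w)^{-1}$, sandwiched by a Jacobian that does not depend on $w$. This includes checking that the centering (the subtraction of $g(\bm x)$ terms) corresponds exactly to eliminating the unknown phase of $\beta$, so that the covariance form, and hence the mixture concavity, is legitimate. If that check fails in some detail, a fallback is available. Compute directly: $w_{\mathrm{sym}}$ yields $G_2=0$, $\mathrm{Var}(X)=m_2\ge m_2-m_1^2$, and an unchanged $\mathrm{Var}(X^2)$. Then show that the explicit expression \eqref{eq. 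CRB-transform} is nonincreasing in $\mathrm{Var}(X)$. Finally, compare $G_2=0$ against the pair $\pm G_2$, using $\max(f(G_2),f(-G_2))\ge f(0)$, which holds by convexity of $f$ in $G_2$ after combining $u$ and $-u$.
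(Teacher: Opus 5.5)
Your proposal is correct and follows essentially the same route as the paper. The paper writes the SPEB directly as a linear, Loewner-monotone functional $L_{u,r}$ of $\Sigma(w)^{-1}$, which is equivalent to your $\mathbf T\mathbf J^{-1}\mathbf T^{\mathrm T}$ form in the $(u,\rho)$ parametrization (your Step~1 check does go through), and then uses $\Sigma(w_{\mathrm{sym}})\succeq\tfrac12(\Sigma(w)+\Sigma(\check w))$, operator convexity and monotonicity of inversion, and the $u\mapsto -u$ symmetry of $\Gamma$ exactly as you do; your Step~2 just spells out the identity $\mathrm{SPEB}(\check w,(u,r))=\mathrm{SPEB}(w,(-u,r))$ that the paper uses implicitly, so the fallback you sketch is not needed.
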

	
	\begin{IEEEproof}
		See Appendix~\ref{app.proposition1}.
	\end{IEEEproof}
	
	In light of Proposition 1, which implies that an optimal solution to the distribution design problem can be found within the class of distributions that are symmetric about the origin. Hence, we restrict the search to such centro-symmetric distributions. Under symmetry, the covariance term vanishes, i.e., $\mathrm{Cov}(X,X^2)=0$. Consequently, for any centro-symmetric distribution $w(\cdot)$, $\mathrm{SPEB}$ is reduced to
	\begin{equation}
		\label{eq:positive-coefficients}
		\begin{aligned}
			&\mathrm{SPEB}(w(\cdot),\bm{p}) = \kappa\!\left(
			\frac{{k_1}}{\mathrm{Var}(X)}
			+
			\frac{{k_2}}{\mathrm{Var}(X^2)}
			\right), 
		\end{aligned}
	\end{equation}
	where $k_1 = \frac{r^2(1+3u^2)}{(1-u^2)^2}$ and $
	k_2 = \frac{4r^4}{(1-u^2)^2}$. Then, $F(w(\cdot))$ reduces to
	\begin{equation}
		\begin{aligned}
			\label{eq:F-sym}
			F(w(\cdot))
			&=
			\kappa\!\left(
			\frac{{k_1}'}{\mathrm{Var}(X)}
			+
			\frac{{k_2}'}{\mathrm{Var}(X^2)}
			\right).
		\end{aligned}
	\end{equation}
	Here, $k_1'$ and $k_2'$ denote the coefficients defined in \eqref{eq:positive-coefficients} evaluated at the worst-case source location $\bm{p}_{\mathrm{worst}}$, i.e.,
	\begin{equation}
		\boldsymbol{p}_{\mathrm{worst}} = \arg\max_{\boldsymbol{p}\in\Gamma} \quad \mathrm{SPEB}(w(\cdot), \bm{p}).
	\end{equation}

	In general, the worst-case location(s) $\boldsymbol{p}_{\mathrm{worst}}$ may
	depend on distribution $w(\cdot)$. However, for centro-symmetric distributions,
	$\boldsymbol{p}_{\mathrm{worst}}$ becomes independent of
	$w(\cdot)$ and is located on the broadside of the MA array.
	
	\begin{proposition}
		\label{lem:worstcase-broadside}
		For any centro-symmetric distribution $w(\cdot)$ supported on $[-a,a]$, 
		the worst-case
		source location is independent of $w(\cdot)$, and is
		given by
		\begin{equation}
			\begin{aligned}
				\boldsymbol{p}_{\mathrm{worst}}
				= [\,0,\; d_{\mathrm{R},\max}\,]^{\mathrm{T}},
			\end{aligned}
		\end{equation}
		where $d_{\mathrm{R},\max}=\frac{2D^2}{\lambda}$ denotes the broadside Rayleigh distance.
	\end{proposition}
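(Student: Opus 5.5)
Under centro-symmetry the SPEB has the separable form \eqref{eq:positive-coefficients}. The quantities $\mathrm{Var}(X)>0$ and $\mathrm{Var}(X^2)>0$ depend only on $w(\cdot)$, and the coefficients $k_1,k_2$ depend only on the source location $(u,r)$. The plan is to show that both $k_1(u,r)$ and $k_2(u,r)$ are \emph{simultaneously} maximized over $\Gamma$ at the same point, $u=0$, $r=d_{\mathrm{R},\max}$. Once this holds, every nonnegative combination $k_1/\mathrm{Var}(X)+k_2/\mathrm{Var}(X^2)$ is maximized at that point, whatever the weights are. The worst-case location is therefore independent of $w(\cdot)$.

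\textbf{Step 1 (parametrize $\Gamma$).} With $\sin\theta=\sqrt{1-u^2}$, the region is
\begin{equation*}
\Gamma=\bigl\{(u,r): u\in(-1,1),\; d_{\mathrm F}(u)\le r\le d_{\mathrm R}(u)\bigr\},\qquad d_{\mathrm R}(u)=\frac{2D^2(1-u^2)}{\lambda}.
\end{equation*}
For fixed $u$, both $k_1=r^2(1+3u^2)/(1-u^2)^2$ and $k_2=4r^4/(1-u^2)^2$ are increasing in $r>0$. Hence the inner maximum over $r$ is attained on the Rayleigh boundary $r=d_{\mathrm R}(u)$; the lower limit $d_{\mathrm F}$ plays no role.

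\textbf{Step 2 (maximize over $u$ on the boundary).} Substituting $r=d_{\mathrm R}(u)$ gives
\begin{equation*}
k_1=\Bigl(\tfrac{2D^2}{\lambda}\Bigr)^2(1+3u^2),\qquad k_2=4\Bigl(\tfrac{2D^2}{\lambda}\Bigr)^4(1-u^2)^2 .
\end{equation*}
Here $k_2$ is clearly maximized at $u=0$, but $k_1$ is \emph{increasing} in $u^2$. This is the main obstacle: the two coefficients pull in opposite directions, so maximizing them simultaneously fails. I would therefore work with the actual sum. Write $d_{\mathrm R}(u)=d_{\mathrm R,\max}\,t$ with $t=1-u^2\in(0,1]$, so that $1+3u^2=4-3t$. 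The SPEB on the boundary then becomes
\begin{equation*}
\kappa\Bigl(d_{\mathrm R,\max}^2\,\frac{4-3t}{V_1}+4d_{\mathrm R,\max}^4\,\frac{t^2}{V_2}\Bigr),
\end{equation*}
where $V_1=\mathrm{Var}(X)$ and $V_2=\mathrm{Var}(X^2)$. This expression is convex in $t$, so its maximum over $t\in(0,1]$ is attained at an endpoint.

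\textbf{Step 3 (compare the endpoints).} The claim is that the endpoint $t=1$ dominates, i.e.,
\begin{equation*}
\frac{d_{\mathrm R,\max}^2}{V_1}+\frac{4d_{\mathrm R,\max}^4}{V_2}\ \ge\ \frac{4d_{\mathrm R,\max}^2}{V_1},
\end{equation*}
which is equivalent to $4d_{\mathrm R,\max}^2 V_1\ge 3V_2$. To verify this, use support on $[-a,a]$: $V_2\le \mathbb E[X^4]\le a^2\,\mathbb E[X^2]=a^2V_1$, where $\mathbb E[X^2]=V_1$ holds by symmetry. It then suffices to have $4d_{\mathrm R,\max}^2\ge 3a^2$, which holds easily because $d_{\mathrm R,\max}=8a^2/\lambda\gg a$ whenever $D\gtrsim\lambda$, as required for a meaningful Fresnel region.

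Hence the supremum over $\Gamma$ is attained at $u=0$, $r=d_{\mathrm R,\max}$, i.e., at $\bm p_{\mathrm{worst}}=[0,d_{\mathrm R,\max}]^{\mathrm T}$. This point does not depend on $w(\cdot)$, which proves the proposition.

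The delicate point is Step 3. The endfire regime $t\to0$ is excluded only through the aperture-to-wavelength condition, so I would state that mild assumption ($D\ge\lambda$ suffices) explicitly.
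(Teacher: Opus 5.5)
Your proof is correct and follows the paper's argument: monotonicity in $r$ puts the maximum on the Rayleigh boundary, convexity in $u^2$ (your $t=1-u^2$ is the same change of variable) reduces the problem to the two endpoints, and the endpoint comparison uses $\mathrm{Var}(X^2)\le a^2\mathrm{Var}(X)$ together with $d_{\mathrm{R},\max}\ge\frac{\sqrt{3}}{2}a$. Your treatment of the non-attained endfire limit and the explicit sufficient condition $D\ge\lambda$ are small refinements, but you should drop the abandoned ``simultaneous maximization'' plan from the opening paragraph, since you show yourself that it fails.
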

	
	\begin{IEEEproof}
		See Appendix~\ref{app.proposition2}.
	\end{IEEEproof}

	According to Proposition~\ref{lem:worstcase-broadside}, the coefficients $k_1'$ and $k_2'$ take the specific values of $d_{\mathrm{R},\max}^2$ and $4d_{\mathrm{R},\max}^4$, respectively. Substituting these values back into \eqref{eq:F-sym} yields the simplified worst-case metric.
		\begin{equation}
			\begin{aligned}
				\label{eq:F-sym-broadside}
				F(w(\cdot))
				&=
				\kappa\!\left(
				\frac{d_{\mathrm{R},\max}^2}{\mathrm{Var}(X)}
				+
				\frac{4d_{\mathrm{R},\max}^4}{\mathrm{Var}(X^2)}
				\right).
			\end{aligned}
		\end{equation}

	Accordingly, the optimization problem $(\text{P1}')$ reduces to (P2):
	\begin{subequations}\label{eq:opt-sym-broadside}
		\begin{alignat}{3}
			(\text{P2}):\quad & & \min_{w(\cdot)} & \quad 
			\kappa\!\left( \frac{d_{\mathrm{R},\max}^2}{\mathrm{Var}(X)} + \frac{4d_{\mathrm{R},\max}^4}{\mathrm{Var}(X^2)} \right) \\
			& & \mathrm{s.t.} & \quad \sum_{\zeta\in\mathcal S} w(\zeta)=1, \\ &&& \quad w(\zeta)\ge 0,\quad \forall \zeta\in\mathcal S.
		\end{alignat}
	\end{subequations}
       
	It is worth noting that the objective function of (P2) depends on $w(\cdot)$ solely through the moments $\mathrm{Var}(X)$ and $\mathrm{Var}(X^2)$. This explicit dependence plays a crucial role in the subsequent analysis.
	
	\section{Optimal MA Geometry Analysis via Moment Methods}
	
	In this section, we first employ the moment problem theory to analyze the optimal antenna distribution, proving that the optimum solution is achieved by a specific distribution with at most three points at the center and two edges. Building upon this structural insight, we subsequently develop a discrete closed-form antenna placement strategy that strictly satisfies the minimum inter-element spacing constraint in \eqref{eq. cons-inter-spacing}.
	
	\subsection{Moment-Based Structural Analysis for Symmetric Distribution}

	Denote the second and fourth moments of $X$ as $m_2=\mathbb{E}[X^2]$ and $m_4=\mathbb{E}[X^4]$. Then,
		\begin{equation}
			\begin{aligned}
				\mathrm{Var}(X)=m_2, \qquad
				\mathrm{Var}(X^2)=m_4 - m_2^2,
			\end{aligned}
		\end{equation}

	Hence, problem (P2) can be recast purely in terms of moments $(m_2, m_4)$.
		\begin{equation}
			\begin{alignedat}{3}
				\label{eq:obj-sym-re}
				& & \min_{w(\cdot)} & \quad \kappa\!\left(\frac{d_{\mathrm{R},\max}^2}{m_2}+\frac{4d_{\mathrm{R},\max}^4}{m_4-m_2^2}\right), \\
				& & \mathrm{s.t.}  & \quad \sum_{\zeta\in\mathcal S} w(\zeta)=1, \\ &&&  \quad w(\zeta)\ge 0,\quad \forall \zeta\in\mathcal S.
			\end{alignedat}
		\end{equation}

    Let $\mathcal{M} = \bigl\{m_0,m_2,m_4\bigr\}$ denote the attainable moment set. By inspecting the structure of \eqref{eq:obj-sym-re}, we establish the following
	two structural lemmas for the optimization over $\mathcal M$

	\begin{lemma}
		The minimum of \eqref{eq:obj-sym-re} over $\mathcal M$ can be attained by a centro-symmetric discrete distribution supported on at most three points (locations) of the array.\footnote{Any such centro-symmetric
			three-point distribution must include a location at the origin.}
	\end{lemma}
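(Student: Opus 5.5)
The plan is to view the objective in \eqref{eq:obj-sym-re} as a function of the moment vector $(m_0,m_2,m_4)$ only, and to invoke the Richter--Tchakaloff theorem. That theorem says any moment vector of finitely many integrable test functions under a probability measure on $[-a,a]$ can be reproduced by a discrete measure with few atoms. Concretely, fix any (centro-symmetric) minimizer or near-minimizer $w(\cdot)$ with moments $(m_2^\star,m_4^\star)$. Apply Richter--Tchakaloff to the three functions $1,\zeta^2,\zeta^4$, restricted to the half-interval $[0,a]$ after folding the symmetric measure. This gives a discrete measure $\nu$ on $[0,a]$ with at most three atoms and the same values of $m_0=1$, $m_2$, $m_4$. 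Since the objective depends on $w$ only through $(m_2,m_4)$, $\nu$ attains the same objective value.

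To reduce further to the stated form, I would pass to the variable $t=\zeta^2\in[0,a^2]$. Here $m_2=\mathbb E[t]$ and $m_4=\mathbb E[t^2]$, and the objective
\[
\kappa\Bigl(\tfrac{d_{\mathrm{R},\max}^2}{\mathbb E[t]}+\tfrac{4d_{\mathrm{R},\max}^4}{\mathbb E[t^2]-\mathbb E[t]^2}\Bigr)
\]
is decreasing in $m_4$ for fixed $m_2$. Fix $m_2$ and maximize $\mathbb E[t^2]$ over laws on $[0,a^2]$ with that mean. Since $t^2$ is convex, the maximum is attained by a two-point law on the endpoints $\{0,a^2\}$, with mass $m_2/a^2$ at $a^2$. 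This is the standard extremal (Chebyshev--Markov) fact: replace each atom at an interior $t$ by its mean-preserving spread onto $\{0,a^2\}$, which does not decrease $\mathbb E[t^2]$. Unfolding to $\zeta=\pm\sqrt t$ and splitting the mass at $a^2$ equally between $\pm a$ gives a centro-symmetric law on $\{-a,0,a\}$, i.e. at most three points including the origin. It keeps $\mathrm{Cov}(X,X^2)=0$ and has objective no larger than that of $w$. This also explains the footnote: a symmetric law with three atoms must have an odd atom at $0$.

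Finally, existence of the minimum follows because the reduced problem is one-dimensional. The variable is $p=\Pr(|X|=a)\in(0,1)$, with $m_2=pa^2$ and $m_4-m_2^2=a^4p(1-p)$. The objective $\kappa\bigl(d^2/(pa^2)+4d^4/(a^4p(1-p))\bigr)$ is continuous and blows up at both ends, so it has an interior minimizer. The main obstacle is justifying that the min over $\mathcal M$ is genuinely attained rather than an infimum, and making the folding/Richter--Tchakaloff step rigorous for general (not necessarily finitely supported) laws. The endpoint-spreading argument sidesteps most of this: it shows directly that every feasible law is dominated by one on $\{-a,0,a\}$, and compactness of $p$ in the one-dimensional reduction then gives attainment.
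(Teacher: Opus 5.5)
Your argument is correct, but it takes a different route from the paper.

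The paper proves this lemma purely with Richter--Tchakaloff. Taking $V=\operatorname{span}\{1,x^2,x^4\}$ on $[-a,a]$, it replaces every law by a measure with at most three atoms and the same $(m_0,m_2,m_4)$, hence the same value of \eqref{eq:obj-sym-re}. That is a moment-\emph{preserving} representation statement; the bound $m_2^2\le m_4\le a^2m_2$ and the monotonicity in $m_4$ are saved for Lemma~3. Your key step is instead the mean-preserving spread of $t=\zeta^2$ onto $\{0,a^2\}$, which is essentially the paper's proof of Lemma~3. It changes the moments, but only in the favourable direction, so every feasible law is dominated by one on $\{-a,0,a\}$.

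This buys you several things:
\begin{itemize}
\item Centro-symmetry and the atom at the origin come for free. The paper's Richter--Tchakaloff measure need not have $m_1=m_3=0$, so the $(m_2,m_4)$ formula need not be its true SPEB, and symmetrizing it can double the atom count.
\item You obtain Lemmas~2 and~3 in one stroke.
\item Via the one-dimensional reduction in $p$, you show that the minimum is actually attained, which the paper tacitly assumes.
\end{itemize}
What you give up is the exact moment-matching statement, which the lemma does not need.

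Three small remarks on your write-up.
\begin{itemize}
\item Your first paragraph is dispensable, and on its own it does not give the claim. Folding, applying Richter--Tchakaloff on $[0,a]$ and unfolding can yield up to six points. If you want a moment-preserving symmetric three-point representation, put mass $1-m_2^2/m_4$ at $0$ and split $m_2^2/m_4$ equally between $\pm\sqrt{m_4/m_2}$. This is feasible precisely because $m_2^2\le m_4\le a^2m_2$.
\item Attainment in $p$ comes from the blow-up as $p\to0^+$ and $p\to1^-$ (coercivity), not from compactness, since $p$ ranges over the open interval $(0,1)$.
\item Laws with $m_2=0$ or $m_4=m_2^2$ have infinite objective and should be discarded at the outset. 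This also guarantees that your replacement law has $p=m_2/a^2<1$.
\end{itemize}
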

	
	\begin{IEEEproof}
    See Appendix~\ref{app.lemma2}.
	\end{IEEEproof}
	
	\begin{lemma}
		 The minimum of \eqref{eq:obj-sym-re} over $\mathcal M$ can be achieved by a centro-symmetric discrete distribution supported at the center and two edges of array, i.e., $\{-a,0,+a\}$.
	\end{lemma}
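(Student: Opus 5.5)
By Lemma~2, it suffices to consider centro-symmetric distributions of the form
\[
\Pr(X=\pm b)=\tfrac{p}{2}, \qquad \Pr(X=0)=1-p,
\]
with $b\in(0,a]$ and $p\in(0,1]$. For such a distribution $m_2=pb^2$ and $m_4=pb^4$, so
\[
\mathrm{Var}(X^2)=pb^4-p^2b^4=p(1-p)b^4 .
\]
The objective of \eqref{eq:obj-sym-re} then becomes
\[
J(p,b)=\kappa\left(\frac{d_{\mathrm{R},\max}^2}{p\,b^2}+\frac{4d_{\mathrm{R},\max}^4}{p(1-p)\,b^4}\right),
\]
which is finite only for $p\in(0,1)$. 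So the center point must carry positive mass at the optimum, consistent with the footnote to Lemma~2.

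The plan is to show that, for every fixed $p\in(0,1)$, $J(p,b)$ is strictly decreasing in $b$ on $(0,a]$. Both terms are positive constants times negative powers of $b$, namely $b^{-2}$ and $b^{-4}$. Hence the minimum over $b$ is attained at $b=a$. Consequently any three-point symmetric optimizer can be replaced by one supported on $\{-a,0,+a\}$ with the same weight $p$, without increasing the objective. This proves the lemma.

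For completeness, it is worth presenting the argument in moment form as well, so that it does not rely on the parametrization. Given any admissible symmetric distribution, we rescale its support by $a/\max|\zeta|\ge 1$. This multiplies $m_2$ by $c^2$ and $\mathrm{Var}(X^2)$ by $c^4$, where $c\ge1$, and so cannot increase the objective. Combining this with Lemma~2 confirms that the edges $\pm a$ are optimal.

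The remaining step is to confirm that a minimizer in $p$ exists on $(0,1)$, so that "attained" is justified. With $b=a$, the function $J(p,a)$ is continuous on $(0,1)$ and tends to $+\infty$ as $p\to0^+$ and as $p\to1^-$. Therefore it attains its minimum at an interior point. This can be written in closed form by setting $\partial J/\partial p=0$, a quadratic in $p$, and I expect that to be treated in the next result. The main subtlety is therefore only the degenerate cases, $p=1$ (no center point) or $b\to0$, where the objective blows up. These are excluded by the coercivity argument above, so I expect no real obstacle beyond the monotonicity observation.
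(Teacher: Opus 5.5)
Your argument is correct once Lemma~2 is granted, but it reaches the result by a different route from the paper. You use Lemma~2 to reduce to the two-parameter family on $\{-b,0,b\}$ with weights $(p/2,1-p,p/2)$, and then push the support outward at fixed weight. Either $J(p,b)$ is decreasing in $b$, or equivalently a dilation by $c\ge 1$ scales $m_2$ by $c^2$ and $m_4-m_2^2$ by $c^4$. The paper instead works directly in the moment plane. From $x^4\le a^2x^2$ and Cauchy--Schwarz it gets $m_2^2\le m_4\le a^2 m_2$. The objective is strictly decreasing in $m_4$ at fixed $m_2$, and $m_4=a^2m_2$ holds exactly when the support lies in $\{-a,0,+a\}$, with edge weight $q=m_2/a^2$. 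So the paper moves vertically at fixed $m_2$, while you move along a dilation ray at fixed $p$; both end on the boundary $m_4=a^2m_2$.

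The paper's route buys more. It shows that \emph{every} feasible distribution is dominated by a $\{-a,0,+a\}$ distribution with the same $m_2$. Interior atoms are ruled out by the equality case $x^4=a^2x^2$, so Lemma~2 is essentially superfluous there. In your route all of that work is delegated to Lemma~2. Dilation alone cannot improve a distribution that already has atoms at $\pm a$ together with interior atoms. Moreover, the Richter--Tchakaloff argument behind Lemma~2 only yields some $\le 3$-atomic measure matching $(m_0,m_2,m_4)$, not a symmetric one.

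You can close this yourself. Any feasible pair with $m_2>0$ is matched by $\{-b,0,b\}$ with $b^2=m_4/m_2\le a^2$ and $p=m_2^2/m_4\le 1$. This uses exactly the paper's inequalities $m_2^2\le m_4\le a^2m_2$. Your coercivity argument for an interior optimal weight is a useful addition, which the paper handles only implicitly through the closed form $q^\star$.
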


    \begin{IEEEproof}
    	See Appendix~\ref{app.lemma3}.
    \end{IEEEproof}

	\subsection{Closed-Form Optimal Array Design}
	
	By Proposition~1, the search for optimal distribution is restricted to
	centro-symmetric distributions. Furthermore, Lemma~2 and Lemma~3 imply that, within the
	class of centro-symmetric distributions, the optimum can be achieved by a three-point
	distribution supported on $\{-a,0,+a\}$. This yields the
	following theorem.
	
	\begin{theorem}
		\label{thm:three-point-optimal}
		The optimal distribution of \eqref{eq:opt-sym-broadside} is obtained by a three-point distribution supported on $\{-a,0,+a\}$.
	\end{theorem}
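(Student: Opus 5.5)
The argument chains Proposition~1 with Lemmas~2 and~3, and then finishes with a direct closed-form optimization over the remaining one-parameter family.

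\emph{Reduction to the three-point family.} By Proposition~1, symmetrizing any feasible $w(\cdot)$ does not increase the worst-case metric $F$, so it suffices to minimize over centro-symmetric distributions. For these, Proposition~\ref{lem:worstcase-broadside} fixes the worst-case source at $[0,d_{\mathrm{R},\max}]^{\mathrm T}$, which gives the objective of (P2). Lemma~2 then reduces the problem to symmetric distributions with at most three support points, one of them at the origin. Lemma~3 further pins the support to $\{-a,0,+a\}$.

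\emph{Closed-form optimization.} Write the distribution as $w(\pm a)=p/2$ and $w(0)=1-p$ with $p\in(0,1]$. Then $m_2=pa^2$, $m_4=pa^4$, and $m_4-m_2^2=a^4p(1-p)$, so the objective becomes
\begin{equation*}
h(p)=\kappa\left(\frac{d_{\mathrm{R},\max}^2}{a^2 p}+\frac{4d_{\mathrm{R},\max}^4}{a^4p(1-p)}\right).
\end{equation*}
This function is strictly convex on $(0,1)$ and blows up at both ends. Setting $h'(p)=0$ gives a quadratic with a unique root $p^\star\in(0,1)$, and this root yields the optimal weights explicitly. This step is routine.

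\emph{Where the difficulty lies.} The main obstacle is really Lemma~3, that is, justifying that the optimal support points sit exactly at $\{0,\pm a\}$ rather than at some interior $\pm b$. Granting that lemma, there is one quick check worth doing. For a symmetric support $\{-b,0,b\}$, the objective scales like $b^{-2}$ and $b^{-4}$ at fixed $p$, so it is decreasing in $b$ and pushing $b$ to $a$ is optimal. This observation also re-verifies Lemma~3 within the three-point family.

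The theorem then follows by combining these steps: every feasible $w(\cdot)$ is dominated by the three-point distribution on $\{-a,0,+a\}$ with weight $p^\star$.
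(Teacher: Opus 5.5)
Your proposal is correct and follows the paper's own route. Proposition~1 (with Proposition~2) reduces the problem to (P2), Lemmas~2 and~3 pin the support to $\{-a,0,+a\}$, and your one-variable minimization of $h(p)$ leads to the paper's closed form $q^\star = 1+\gamma-\sqrt{\gamma(1+\gamma)}$ with $\gamma = 4d_{\mathrm{R},\max}^2/a^2$. Your scaling check in $b$ is a harmless extra: once Lemma~2 gives the symmetric support $\{-b,0,b\}$, that check alone yields Lemma~3's conclusion.
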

	
	Consider such a three-point centro-symmetric distribution supported on
	$\{-a,0,+a\}$ with corresponding probability masses $(\tfrac{q}{2}, 1-q, \tfrac{q}{2})$, where $q \in [0,1]$. The corresponding moments are given by
	\begin{equation}
		\mathrm{Var}(X) = q\,a^2,
		\qquad
		\mathrm{Var}(X^2) = q(1-q)a^4.
	\end{equation}

	Hence, problem \eqref{eq:opt-sym-broadside} becomes a one-dimensional
	minimization over $q$:
		\begin{equation}
			\begin{aligned}
				\label{eq:obj-p}
				\min_{q\in[0,1]}\ \frac{\kappa\,d_{\mathrm{R},\max}^2}{a^2}\left(\frac{1}{q}+\frac{\gamma}{q(1-q)}\right),
			\end{aligned}
		\end{equation}
	where $\gamma = \frac{4d_{\mathrm{R},\max}^2}{a^2} = \frac{256a^2}{\lambda^2}$.
	This is a convex optimization problem and admits a unique minimizer $q^\star\in(\frac{1}{2},1]$ obtained by taking the derivative with respect to $q$, yielding
	\begin{equation}
		\begin{aligned}
			\label{eq:qstar}
			&q^\star \;=\; 1 + \gamma - \sqrt{\gamma(1+\gamma)}.
		\end{aligned}
	\end{equation}

	It is worth noting that $q^\star$ monotonically decreases from $1$ and asymptotically approaches $0.5$ as the value of $a/\lambda$ increases, as shown in Fig.~\ref{fig:pstar}. In practical deployments where the aperture spans several wavelengths ($a \gg \lambda$), $q^\star$ rapidly converges to its asymptotic limit. Hence, we can directly set $q^\star = 0.5$, yielding a fixed probability mass allocation of $(0.25, 0.5, 0.25)$ for an efficient near-optimal array design.
	
		\begin{figure}[t]
		\centering
		\includegraphics[width=0.48\linewidth, keepaspectratio]{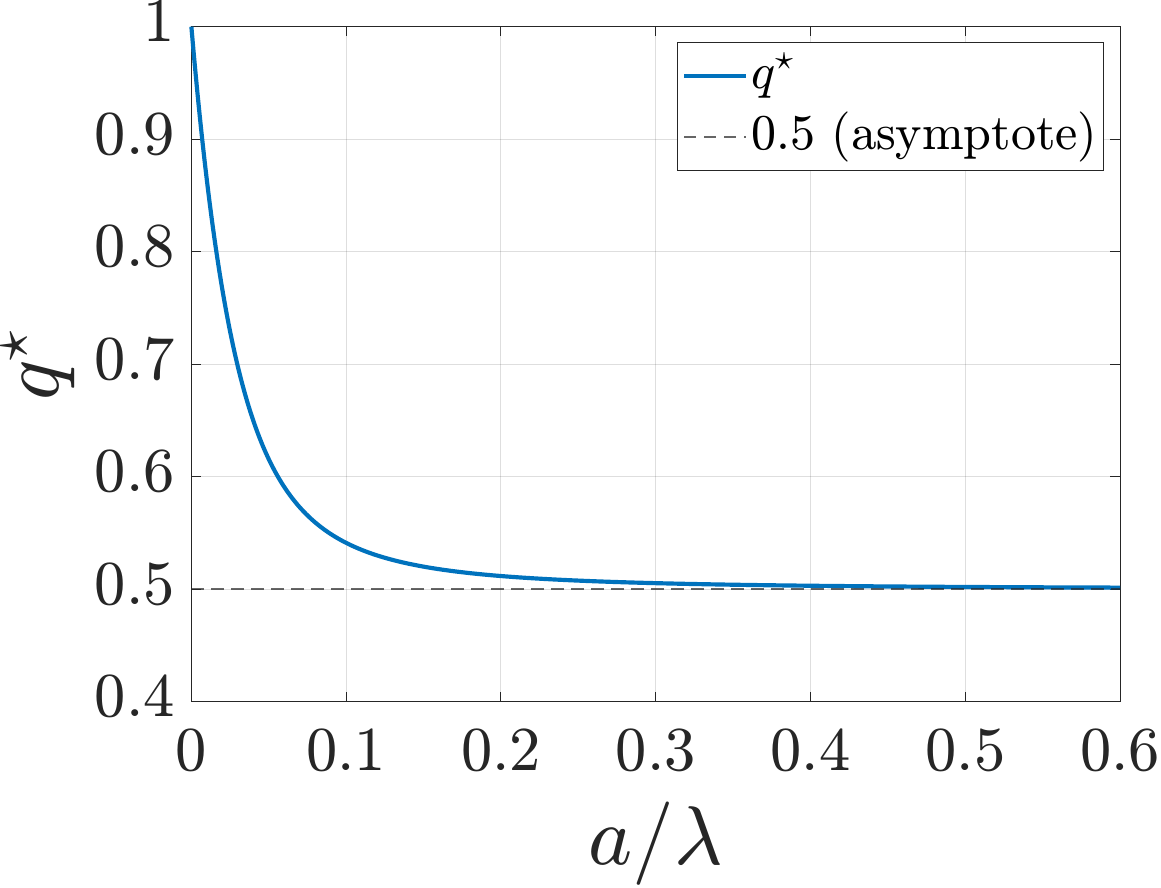}
		     		\caption{Optimal $q^\star$ versus $a/\lambda$.}
		     		\label{fig:pstar}
	\end{figure}
	
	Having derived the closed-form solution for problem $(\text{P1}')$, we now incorporate the inter-element spacing constraint and the given number of antennas $N$ to ensure practical feasibility. Based on the optimal three-point centro-symmetric distribution obtained above, we propose a discrete realization strategy that approximates the theoretical optimum while satisfying the minimum spacing $d$.
	
	Specifically, we allocate $N_{\text{left}} = N_{\text{right}} = \operatorname{round} (0.25N) $ antennas (where $\operatorname{round}(\cdot)$ rounds to the nearest integer) to form clusters at the left and right aperture endpoints, respectively, while the remaining $N_{\text{c}} = N - N_{\text{left}} - N_{\text{right}}$ antennas are placed around the aperture center. Within each of these three clusters (i.e., the center and two edges), the antennas are placed uniformly with the fixed minimum spacing $d$.

	\section{Numerical Results}
	
	This section presents simulation results to validate the effectiveness of the proposed closed-form array design.
	  Specifically, we evaluate its sensing performance against three fixed geometry benchmarks and an exhaustive-search baseline:
	  
	  \begin{itemize} 
	  	\item \textbf{ULA ($\lambda/2$ spacing) \cite{Balanis}:} A conventional uniform linear array with half-wavelength inter-element spacing, centered within the aperture.
	  	\item \textbf{Sparse ULA \cite{ShiTSP2021}:} A uniformly spaced array stretched to span the full aperture, resulting in a larger inter-element spacing than $\lambda/2$.
	  	\item \textbf{Two-edge array \cite{MaTWC0824}:} An array with antennas equally divided and clustered at the two edges of the aperture. 
	  	 \item \textbf{Exhaustive search:} A numerically optimized array obtained via exhaustive searching over all possible antenna locations, serving as a high-complexity benchmark with the best achievable performance.
	  \end{itemize}

	The simulation parameters are set as follows. Carrier frequency $f_0 = 28~\text{GHz}$ \cite{ZhangTWC2022} (i.e., wavelength $\lambda \approx 0.01~\text{m}$), minimum inter-antenna spacing $d = \lambda/2$, and number of snapshots $T = 1024$. Unless otherwise specified, the half-aperture $a = 25\lambda$, number of antennas $N = 25$, and average received signal-to-noise ratio $\text{SNR} = 5~\text{dB}$ \cite{JoJSAC2023}.

	\begin{figure}[t]
		\centering
		
		\includegraphics[width=0.48\linewidth, height=0.48\linewidth, keepaspectratio]{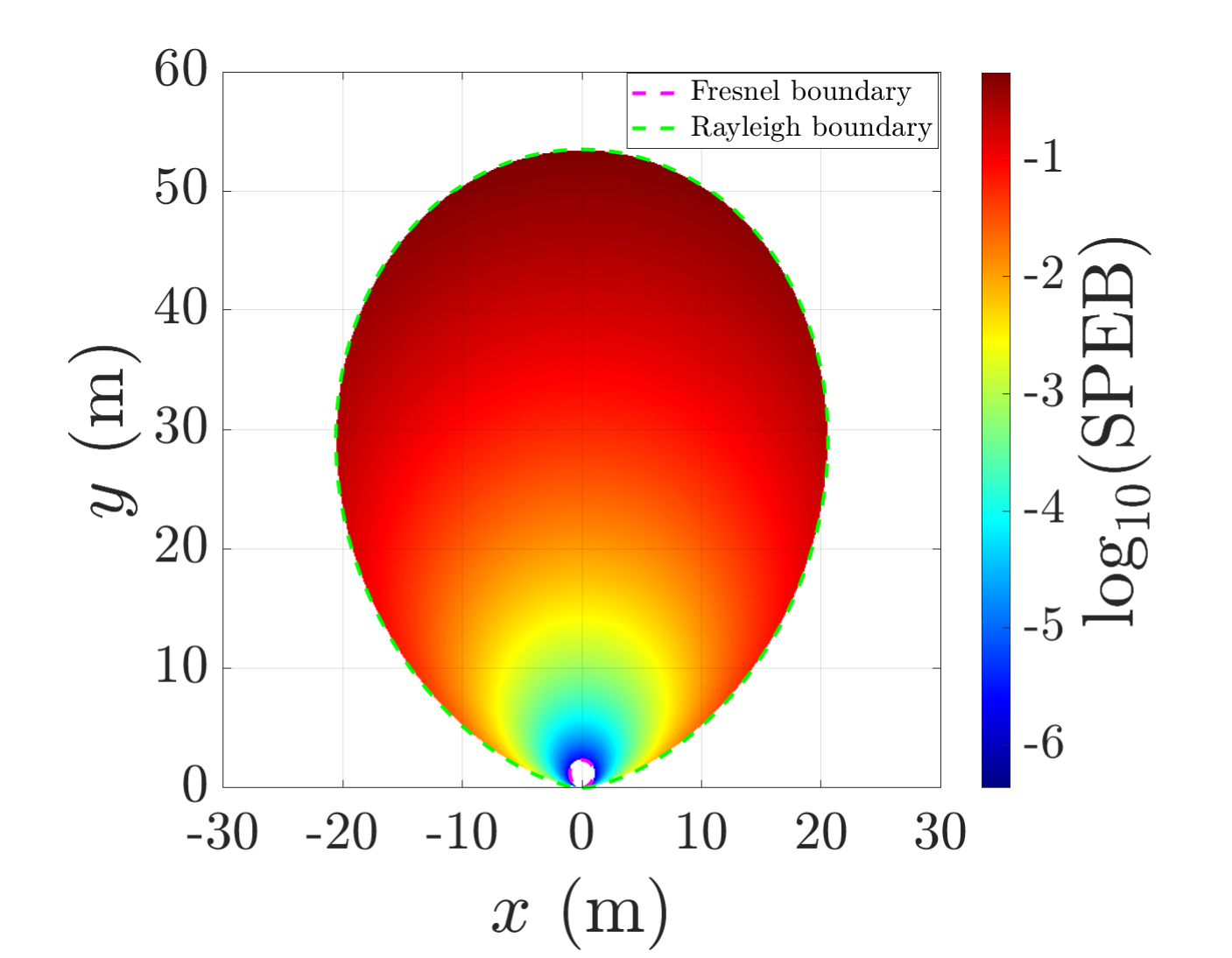}
		
		\caption{The heatmap for near field region.}
		\label{fig:heatmap}
	\end{figure}

	\begin{figure}[t]
		\centering
		\begin{minipage}[b]{0.48\linewidth}
			\centering
			\includegraphics[width=\linewidth]{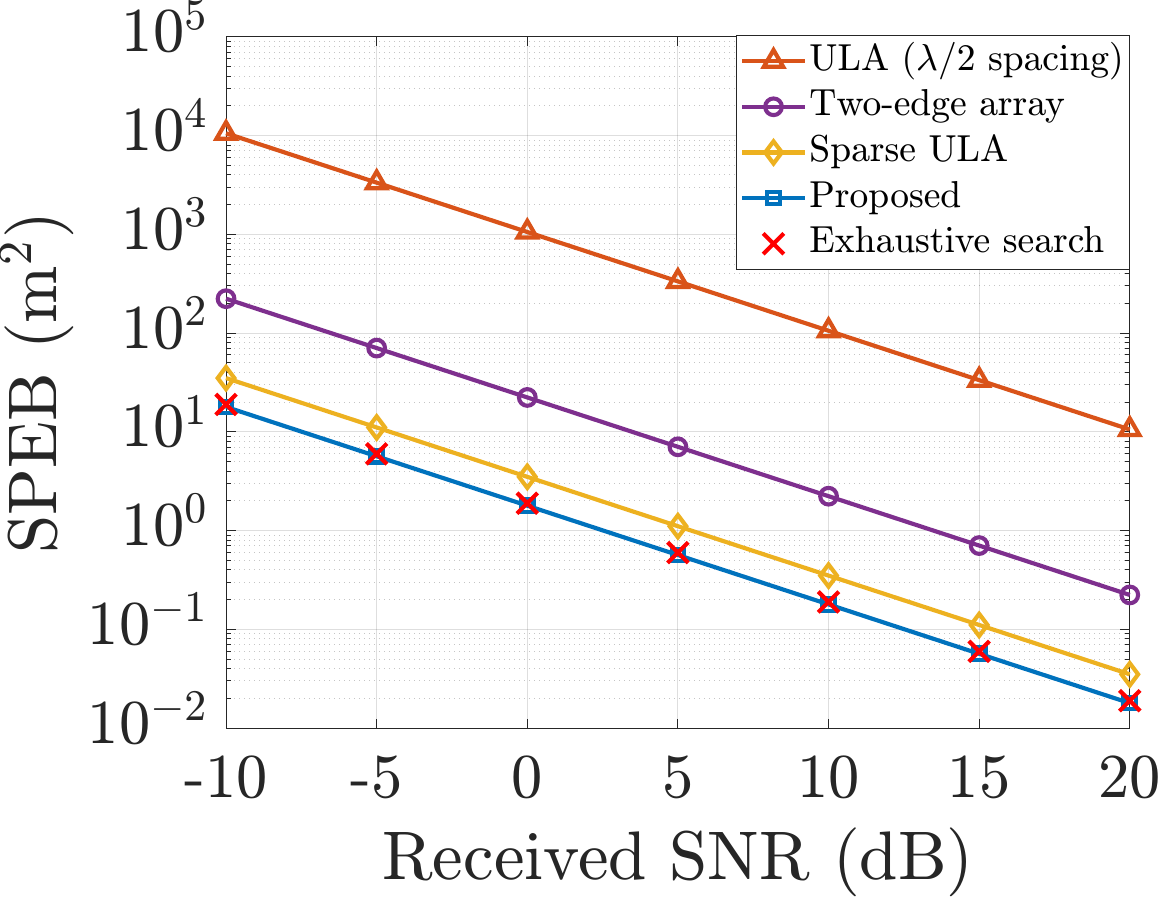}
			\vspace{-5pt}
			\centerline{(a)}
			\label{fig:compare-snr}
		\end{minipage}
		\hfill
		\begin{minipage}[b]{0.48\linewidth}
			\centering
			\includegraphics[width=\linewidth]{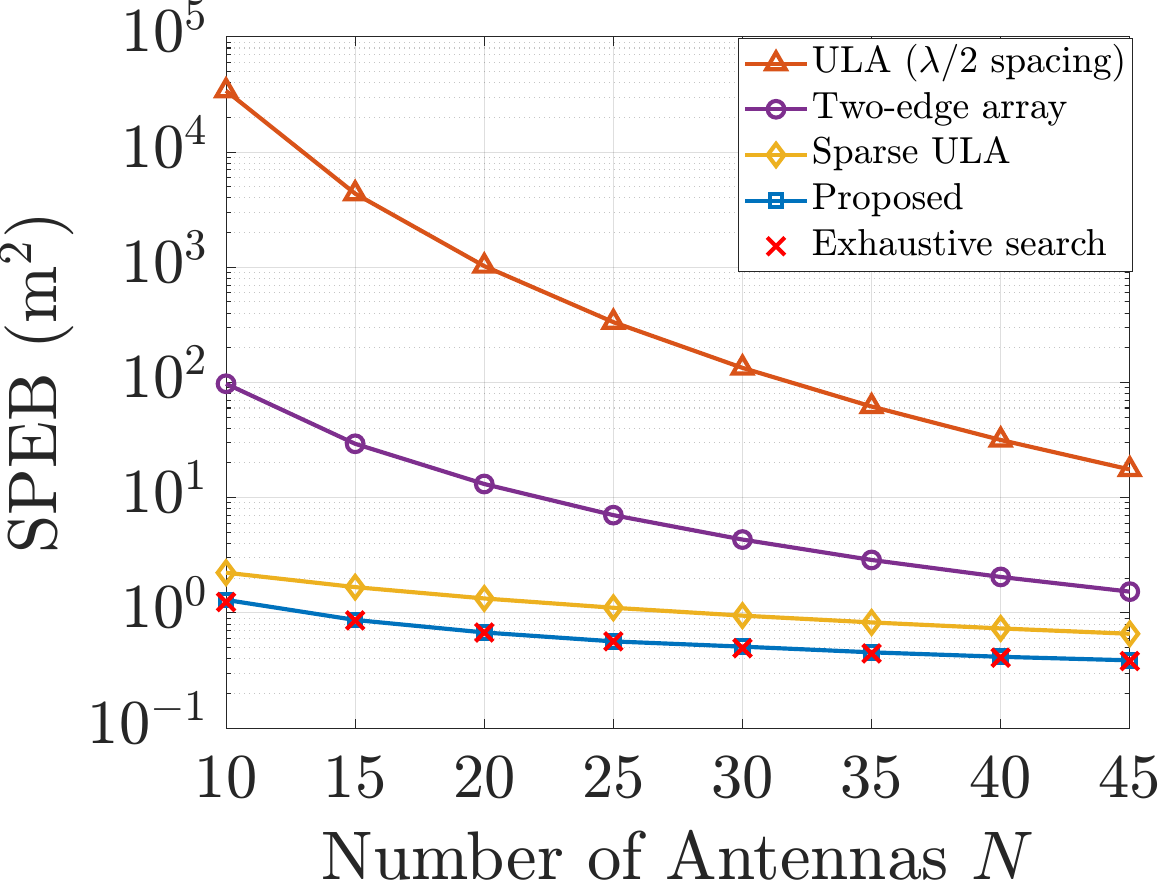}
			\vspace{-5pt}
			\centerline{(b)}
			\label{fig:compare-N}
		\end{minipage}
		\caption{Worst-case $\text{SPEB}$ performance comparison versus (a) Received SNR; (b) Number of antennas $N$.}
		\label{fig:SNR_comparison}
	\end{figure}
	 
	 Fig. \ref{fig:heatmap} illustrates the spatial distribution of $\log_{10}(\mathrm{SPEB})$ across the near-field region $\Gamma$ for the array geometry obtained via the proposed closed-form solution. As shown, the estimation performance is the best (lowest SPEB, indicated by the blue region) near the array origin, while it degrades significantly as the source approaches the outer boundary of $\Gamma$, with the worst performance at the broadside. 
	 This result aligns with Proposition~2, which mathematically demonstrates that the worst-case estimation bound occurs at the broadside.  

	 Fig. \ref{fig:SNR_comparison} evaluates the worst-case $\mathrm{SPEB}$ against variations in received SNR and number of antennas $N$. Across both investigated regimes, the proposed closed-form design consistently achieves the lowest estimation error, significantly outperforming traditional fixed-geometry benchmarks. Notably, our theoretical solution exhibits a remarkable alignment with the exhaustive search baseline (red crosses), corroborating its effectiveness. While increasing SNR and $N$ yields expected performance gains due to enhanced power and additional DoFs. These results prove the proposed design maintains robust superiority, confirming its efficiency in diverse near-field sensing scenarios.
	 
       \section{Conclusion}

	   In this paper, we investigated the optimal design of a linear MA array for near-field single-source sensing. We formulated a worst-case optimization problem with the SPEB as performance metric. By leveraging the Richter-Tchakaloff theorem, we proved that the optimal antenna position distribution admits a centro-symmetric three-point structure. This structural result further enables a closed-form design and an efficient discrete deployment strategy. Simulation results show that the proposed design consistently outperforms conventional uniform distributed antenna configurations and achieves performance comparable to exhaustive search at substantially reduced computational complexity.

		\appendices

		\section{Proof of Lemma~1}
		
		The Cartesian position parameter $\bm{p} = (p_1,p_2)^\mathrm{T}$ is related to
		$\bm{\eta} = (u,r)^\mathrm{T}$ through
		$
				p_1 = r u, 
				p_2 = r\sqrt{1-u^2},
		$
		whose Jacobian transform matrix is
		$\renewcommand{\arraystretch}{0.2} 
		\setlength{\arraycolsep}{2pt}     
					J(\bm{\eta})
					= \frac{\partial \bm{p}}{\partial \bm{\eta}}
					=
					\begin{bmatrix}
						r & u\\
						-\tfrac{ru}{\sqrt{1-u^2}} & \sqrt{1-u^2}
					\end{bmatrix}.
		$
		
		According to \cite{kay1993fssp_estimation}, the CRB matrix w.r.t. parameter $\bm{p}$, is given as
      	$
				\mathrm{CRB}_{\bm{p}}(\bm{x},\bm{\eta})
				= J(\bm{\eta})\,\mathrm{CRB}_{\bm{\eta}}(\bm{x},\bm{\eta})\,J(\bm{\eta})^\mathrm{T}.
		$
		By calculating 
		$
		J(\bm{\eta})^\mathrm{T} J(\bm{\eta})
		=
		\begin{bmatrix}
			\tfrac{r^2}{1-u^2} & 0\\  
			0 & 1
		\end{bmatrix}
		$, 
		we finally obtain
		\begin{small}
			\begin{equation}
				\begin{aligned}
					\mathrm{SPEB}(\bm{x},\bm{p}) 
					&= 
					\tr\!\big(\mathrm{CRB}_{\bm{\eta}}(\bm{x},\bm{\eta})\,J(\bm{\eta})^\mathrm{T} J(\bm{\eta})\big) \\
					&=
					\frac{r^2}{1-u^2}\,\mathrm{CRB}_u(\bm{x})
					+ \mathrm{CRB}_r(\bm{x},\bm{\eta}).
				\end{aligned}
			\end{equation}
		\end{small}
		
		\section{Proof of Proposition~1}

		Define the covariance matrix
		$	\Sigma(w(\cdot)) =
			\begin{bmatrix}
				\mathrm{Var}(X) & \mathrm{Cov}(X,X^2)\\
				\mathrm{Cov}(X,X^2) & \mathrm{Var}(X^2)
			\end{bmatrix}
		$.
			Set
			$
				e_1=[1,0]^\mathrm{T}$ and $
				v(u,r)=[2ur, -2r^2]^\mathrm{T},
			$
			define $L_{u,r}(Z)$ for any positive definite matrix $Z\in\mathbb{S}^2_{++}$:
			\begin{small}
				\begin{equation}
					\begin{aligned}
						L_{u,r}(Z)
						= \kappa\,\frac{r^2}{1-u^2}\,e_1^\mathrm{T} Z e_1
						+\frac{\kappa}{(1-u^2)^2}\,v(u,r)^\mathrm{T} Z v(u,r)
					\end{aligned}
				\end{equation}
			\end{small}

			Then, $F(w(\cdot))=\max_{(u,r)\in\Gamma} L_{u,r}\big(\Sigma(w(\cdot))^{-1}\big)$, 
			let $\check w(\zeta)=w(-\zeta)$ and $w_{\mathrm{sym}}(\zeta)=\tfrac12\big(w(\zeta)+w(-\zeta)\big)$.
			A straightforward computation gives
					$\Sigma(w_{\mathrm{sym}}(\cdot))
					\succeq\tfrac12\big(\Sigma(w(\cdot))+\Sigma(\check w(\cdot))\big)$,

			Since inverse operation is monotone decreasing and convex on
			$\mathbb{S}^2_{++}$, so
			$
				\Sigma(w_{\mathrm{sym}}(\cdot))^{-1}
				\preceq
				\tfrac12\big(\Sigma(w(\cdot))^{-1}+\Sigma(\check w(\cdot))^{-1}\big).
			$
			We note that map $Z$ to $L_{u,r}(Z)$ is linear and Loewner-monotone on $\mathbb{S}^2_{++}$.
			Hence we accordingly obtain
			$
				L_{u,r}\big(\Sigma(w_{\mathrm{sym}}(\cdot))^{-1}\big)
				\le
				\frac12\Big(
				L_{u,r}\big(\Sigma(w(\cdot))^{-1}\big)
				+L_{u,r}\big(\Sigma(\check w(\cdot))^{-1}\big)
				\Big),
			$
			which implies
				$
					F(w_{\mathrm{sym}}(\cdot))
					\le
					\frac12\big(F(w(\cdot))+F(\check w(\cdot))\big).
				$
				
			Since $\Gamma$ is symmetric,
			the change of variable $(u,r)$ to $(-u,r)$ leaves the maximization set
			invariant, and therefore
			$
				F(\check w(\cdot)) = F(w(\cdot)).
			$
			Combining this with the inequality above gives
			$
				\frac12\big(F(w(\cdot))+F(\check w(\cdot))\big) = F(w(\cdot)).
			$
			Thus $F(w_{\mathrm{sym}}(\cdot))\le F(w(\cdot))$, which proves Proposition~1.
			\label{app.proposition1}

		\section{Proof of Proposition~2}

			It can be observed \eqref{eq:opt-sym-broadside}
			is strictly increasing in $r$ for any $u\in[-1,1]$. This indicates the maximum over the near-field
			region is attained on the Rayleigh boundary
			$r=d_R(u)$. We define
			$
					T(u;w(\cdot))
					=
					\kappa\!\left[
					\frac{k_1}{\mathrm{Var}(X)}
					+
					\frac{k_2}{\mathrm{Var}(X^2)}
					\right].
			$
			Substituting $r=d_{\mathrm{R}}(u) = d_{\mathrm{R}, \max}(1-u^2)$ and the coefficients $k_1, k_2$ from (\ref{eq:positive-coefficients}) into the $T(u;w(\cdot))$ yields
			\begin{small}
				\begin{equation}
					\begin{aligned}
						\label{App. eq. T(u)}
						\frac{1}{\kappa}T(u;w(\cdot))
						=
						\frac{d_{\mathrm{R}, \max}^2(1+3u^2)}{\mathrm{Var}(X)}
						+
						\frac{4d_{\mathrm{R}, \max}^4(1-u^2)^2}{\mathrm{Var}(X^2)}.
					\end{aligned}
				\end{equation}
			\end{small}
			
			Set $\upsilon=u^2\in[0,1]$ and rewrite \eqref{App. eq. T(u)} as $f(\upsilon)$. Then
			$   f''(\upsilon)  =   \frac{8d_{\mathrm{R}, \max}^4}{\mathrm{Var}(X^2)}>0   $.  So $f$ is strictly convex on $[0,1]$, and hence its maximum on $[0,1]$ is attained at an endpoint $\upsilon\in\{0,1\}$, i.e., $u_\mathrm{worst}^2\in\{0,1\}$. Then, compare the value of two endpoints
			\begin{small}
				\begin{equation}
					\begin{aligned}
						\frac{1}{\kappa}\left( T(0;w(\cdot))-T(1;w(\cdot)) \right)
						= 4 d_{\mathrm{R}, \max}^2 \left( \frac{d_{\mathrm{R}, \max}^2}{\mathrm{Var}(X^2)} - \frac{3}{4\mathrm{Var}(X)} \right).
					\end{aligned}
				\end{equation}
			\end{small}

			For practical MA arrays systems, the aperture typically spans multiple wavelengths, yielding $d_{\mathrm{R}, \max} > \frac{\sqrt{3}}{2}a$.
			Therefore
				\begin{equation}
					\begin{aligned}
						\operatorname{Var}(X^2)\le
						\mathbb{E}[X^4]
						\le
						a^2 \mathbb{E}[X^2]
						=
						a^2\mathrm{Var}(X) < \frac{4}{3} d_{\mathrm{R}, \max}^2\mathrm{Var}(X).
					\end{aligned}
				\end{equation}
			This indicates $T(0;w(\cdot))> T(1;w(\cdot))$, so the maximum is
			attained at $(u, r)=(0, d_{\mathrm{R},\max})$.
			Hence the worst-case location
			$
			\boldsymbol{p}_{\mathrm{worst}}
			= [\,0,\; d_{\mathrm{R},\max}\,]^{\mathrm{T}}
			$, which proves Proposition~2.
			
			\label{app.proposition2}
        
        \section{Proof of Lemma~2}
        		
        We first invoke Richter-Tchakaloff Theorem \cite{schmüdgen2017}. It says: Suppose that $(\mathcal{Y},\mu)$ is a measure space, 
        and let 
        $V \subset L^1_{\mathbb{R}}(\mathcal{Y},\mu)$ be a finite-dimensional 
        linear subspace of finite dimension $\dim(V)$, where $L^1_{\mathbb{R}}(\mathcal{Y},\mu)$ denotes the space of real-valued integrable functions on $\mathcal{Y}$.
        Let $L^\mu:V\to\mathbb{R}$ be the moment functional defined by
        $
        			L^\mu(f) = \int f 
       $ for $f\in V$.
        Then there exists a $k$-atomic measure $\nu = \sum_{j=1}^k z_j \delta_{x_j}$ with $k \le \dim(V)$, 
        such that $L^\mu = L^\nu$.
        
        Now we apply this theorem to prove Lemma~2.
        Observing \eqref{eq:obj-sym-re} depends on the measure $\mu$ solely through the moments $(m_0, m_2, m_4)$. Let $V = \operatorname{span}\{1, x^2, x^4\} \subset L^1_{\mathbb{R}}([-a,a])$ be the subspace with $\dim(V) = 3$. According to the Richter-Tchakaloff Theorem, for any arbitrary probability measure $\mu$ on $[-a,a]$, there exists a $k$-atomic measure $\nu$ with $k \le \dim(V) = 3$ such that $\nu$ preserves the moments of $\mu$, the objective values for both measures are identical:
        	\begin{equation}
        		\begin{aligned}
        			\mathcal{J}(\nu)
        			&= \varphi \big(m_0(\nu),m_2(\nu),m_4(\nu)\big) \\
        			&= \varphi \big(m_0(\mu),m_2(\mu),m_4(\mu)\big)
        			= \mathcal{J}(\mu).
        		\end{aligned}
        	\end{equation}
        where $\mathcal{J}(\cdot)$ denotes the objective functional defined on the measure space, and $\varphi(\cdot)$ represents the mapping from the moment vector $(m_0, m_2, m_4)$ to the objective value.
        This indicates that the performance achievable by a general probability measure can be equally attained by a discrete measure supported on at most three points. Therefore we can restrict the optimization domain to the set of measures with at most three atoms. This completes the proof.
        \label{app.lemma2}
        
        \section{Proof of Lemma~3}
			Let $\mu$ be a probability measure supported on $[-a,a]$. Since $x^4 \le a^2 x^2$ for all $x\in[-a,a]$, we have
			$
					m_4 
					\le 
					 a^2 m_2.
			$
			By the Cauchy-Schwarz inequality
				\begin{equation}
					m_2^2
					= \Bigl(\int x^2 d\mu(x)\Bigr)^2
					\le \Bigl(\int x^4 d\mu(x)\Bigr)\Bigl(\int 1\,d\mu(x)\Bigr)
					= m_4.
				\end{equation}
			Thus any feasible pair $(m_2,m_4)$ satisfies
			$
			m_2^2 \le m_4 \le a^2 m_2.
			$
			
			For each fixed $m_2$, the objective in \eqref{eq:obj-sym-re} is strictly
			decreasing in $m_4$ on $[m_2^2,a^2 m_2]$, so
			for each $m_2$ there exists an optimal solution lying on the boundary $m_4 = a^2 m_2$.
			Moreover, equality in $m_4 
			\le 
			a^2 m_2$ holds if and only if
			$x^4 = a^2 x^2$ for all $x$ in the support of $\mu$, which implies
			$x^2 \in \{0,a^2\}$ and hence $x \in \{-a,0,+a\}$. Together with Lemma~2,
			the global optimum can be attained by a centro-symmetric three-point
			measure supported on $\{-a,0,+a\}$, proving Lemma~3.
             \label{app.lemma3}
		
		\bibliographystyle{IEEEtran}
		\bibliography{references}   
		
	\end{document}